\newcommand{\seclab}[1]{\label{sec:#1}}
\newcommand{\secref}[1]{Section~\ref{sec:#1}}
\def\scan(#1){\ensuremath{\mathrm{scan}(#1)}}
\def\sort(#1){\ensuremath{\mathrm{sort}(#1)}}
\def\Scan(#1){\ensuremath{\frac{#1}{B}}}
\def\Sort(#1){\ensuremath{\frac{#1}{B} \log_{M/B} \frac{#1}{B}}}
\def\boundary(#1){\ensuremath{\partial(#1)}}
\newtheorem{lemma}{Lemma}
\newenvironment{proof}{Proof:}{\qed}
\def\squareforqed{\hbox{\rlap{$\sqcap$}$\sqcup$}}
\def\qed{\ifmmode\squareforqed\else{\unskip\nobreak\hfil
\penalty50\hskip1em\null\nobreak\hfil\squareforqed
\parfillskip=0pt\finalhyphendemerits=0\endgraf}\fi}
\title{I/O-optimal algorithms on grid graphs}
\author{Herman~Haverkort\footnote{%
  Dept.\ of Mathematics and Computer Science,
  Eindhoven University of Technology,
  the Netherlands,
  \texttt{cs.herman@haverkort.net}}}
\begin{document}
% FOR LNCS:
%\pagestyle{headings}
%\mainmatter
\maketitle

\begin{abstract}
Given a graph of which the $n$ vertices form a regular two-dimensional grid, and in which each (possibly weighted and/or directed) edge connects a vertex to one of its eight neighbours, the following can be done in $O(\scan(n))$ I/Os, provided $M = \Omega(B^2)$: computation of shortest paths with non-negative edge weights from a single source, breadth-first traversal, computation of a minimum spanning tree, topological sorting, time-forward processing (if the input is a plane graph), and an Euler tour (if the input graph is a tree). The minimum-spanning tree algorithm is cache-oblivious. The best previously published algorithms for these problems need $\Theta(\sort(n))$ I/Os.
Estimates of the actual I/O volume show that the new algorithms may often be very efficient in practice.
\end{abstract}

\section{Introduction}\seclab{Intro}
Many applications work with massive graphs that are too large to fit in the main memory of a computer. Therefore, in computations on such graphs, the bottleneck is often the transfer of data (I/O) between main memory and disk, rather than actual work of the CPU. In making such computations feasible, the primary goal is therefore to make them \emph{I/O-efficient}. Sometimes the graphs have a grid structure: the vertices form a regular two-dimensional grid, and each edge connects a vertex to one of its eight neighbours. An example of this are graphs constructed from elevation models of terrains in geographic information systems, which are used for simulations of hydrological processes, erosion etc. Therefore I/O-efficient computations on such graphs have gotten a fair amount of attention~\cite{GridSSSP,TerraFlow,TerraCost,Pfafstetter,TerraStream}. In this paper we study the computation of single-source shortest paths with non-negative edge weights (SSSP), the computation of minimum spanning trees (MST), topological sorting, and several other problems that can be solved in linear time on grid graphs~\cite{CLRS,LinearSSSP}, but for which, to my knowledge, an I/O-optimal algorithm was not yet known.

We analyse our algorithms in the standard model of Aggarwal and Vitter~\cite{IOModel}. In this model, the machine has a main memory of size $M$ and an external memory (disk) of unbounded size. The CPU can only operate on data that is currently in main memory; it cannot operate directly on data that is currently on disk. The address space of the disk is divided into blocks of $B$ consecutive addresses. It is often assumed that $M = \Omega(B^2)$, and we will do so throughout this paper. Data can only be transferred between main memory and disk by transferring a complete block; such a block transfer is called an I/O. A \emph{cache-aware} algorithm would know $M$ and $B$ and could control exactly which blocks are transferred. A \emph{cache-oblivious} algorithm~\cite{CacheOblivious} would not know $M$ and $B$ and would not take control: it would simply try to access data in the disk's address space directly, and count on the operating system to bring the corresponding block into memory if necessary. To make space for this in memory, the operating system may need to write another block from memory back to disk: the optimal strategy would be to evict the block needed furthest in future and we assume the operating system would do exactly this (Frigo et al.\ argue why this assumption is reasonable~\cite{CacheOblivious}). %The efficiency of a cache-aware or a cache-oblivious algorithm in this model is the number of I/Os it needs.

Both in the cache-aware and in the cache-oblivious setting, scanning a contiguous file of $n$ items in this model takes $\scan(n) = \Theta(\Scan(n))$ I/Os, and sorting takes $\sort(n) = \Theta(\Sort(n))$ I/Os~\cite{IOModel,CacheOblivious}. Note that $\scan(n)$ is much less than $n$: it is the difference between doing one I/O roughly every $B$ steps of a computation, or doing one I/O roughly every step of the way. On large files this constitutes a difference in actual running time between several minutes and several years. Since the problems we study all have linear-time algorithms on a random-access machine, we would hope to obtain algorithms that run in $O(\scan(n))$ I/Os. Arge et al.\ already achieved this for labelling connected components~\cite{GridSSSP}.

Spending more than $O(\scan(n))$ I/Os on problems such as breadth-first traversal and topological sorting cannot always be avoided if the vertices of the graph (and their adjacency lists) are not ordered in any way. If there is no structure in the input, any algorithm solving these problems must be able to perform any permutation of the vertices---consider a traversal of a graph that consists of a single path, of which the vertices are given in random order---and permutations require $\Omega(\min(n, \sort(n))) = \omega(\scan(n))$ I/Os in the worst case~\cite{IOModel}. However, if the input is provided in a suitable format, we may be able to do better.

In this paper we will see that several problems on grid graphs (including all problems mentioned above) can be solved by cache-aware algorithms in $O(\scan(n))$ I/Os, provided the vertices and their outgoing edges are given in row-major, column-major or Z-order (see \secref{prelims}). The minimum-spanning tree algorithm is even cache-oblivious. For time-forward processing (see \secref{TFP}), there is a small catch: the input should be a \emph{plane} grid graph.

In addition to an analysis of the asymptotic I/O-complexity, this paper gives calculations of the I/O volume (number of bytes transferred) relative to the sum of the input and the output size of each algorithm. In these calculations we assume we have 2\,GB = $2^{31}$ bytes of main memory; data is transferred in blocks of 128\,KB = $2^{17}$ bytes (reading or writing smaller amounts is possible but counts for 128\,KB of I/O volume); the graph consists of $n = 2^{40}$ vertices; data is stored as 8-, 16-, 32- or 64-bits numbers.
%A weighted, directed graph is given in $64n$ bytes (listing the weights for each of the possible outgoing edges of a vertex as 64-bits numbers); an unweighted graph is given in $n$ bytes (using one bit for each possible edge to specify whether the edge is included in the graph or not).
(With these parameters, I/O-efficient merge sort on the input takes three passes and thus has relative I/O volume~3.)
For a justification of the I/O volume computations, see the appendix, \secref{apx:intro}.

%This extended abstract only sketches the algorithms and reports the results. Proofs and calculations can be found in the appendix.

At the end of the paper we reflect on the practical potential of our results and
we discuss possible applications, possible extensions to general planar graphs, and remaining questions for further research.

\paragraph{About this manuscript}
\textit{This manuscript consists of a main text that briefly sketches my algorithms for the problems mentioned above, and an appendix that gives more detailed descriptions, proofs, and analysis. This set-up results from the fact that this paper was originally submitted to the European Symposium on Algorithms in 2011. The programme committee of that conference rejected the manuscript, citing reviewers who found the results not interesting enough and found that the paper would be unreadable when published without the appendix. This may be true: the latter results from my attempt to compress the paper into the page limit of the conference. I hope to be able to continue working on this topic at some time in the future, but, since I do not have the time or funding to do so in the near foreseeable future, I decided to put the manuscript on arXiv as it is, with few changes.}

\section{Preliminaries}\seclab{prelims}
We assume that grid graphs are given by listing their vertices one by one, for each vertex giving the weights of the (at most eight) outgoing edges connecting them to their neighbours in the grid. We assume that the vertices are listed in order, either row by row, and within each row from left to right; column by column, and within each column from the top down; or in Z-order. The last order is defined as follows. Suppose a grid graph $G$ has $r$ rows, numbered from $1$ to $r$, and $c$ columns, numbered from $1$ to $c$. Let $k$ be the smallest integer that $2^k \geq r$ and $2^k \geq c$. We divide $G$ into quadrants by cutting it between row $2^k/2$ and $2^k/2+1$, and between column $2^k/2$ and $2^k/2+1$. On disk, we first store the top left quadrant, then the top right quadrant, then the bottom left quadrant, and finally the bottom right quadrant of $G$. Each quadrant is ordered recursively in the same way. Observe that in all three orders, one can compute the row and column numbers of a vertex from its address on disk and vice versa without I/O.

A \emph{canonical cluster} is a quadrant that appears at some level in the recursion that defines the Z-order. A canonical cluster is therefore any block of vertices in row $i \cdot 2^h + 1$ to $(i+1) 2^h$ and column $j \cdot 2^h + 1$ to $(j+1) 2^h$, for some natural numbers $h, i, j$. Clearly, when vertices are stored in Z-order, each canonical cluster occupies a contiguous part of the disk. This will be good for the practical performance of all algorithms in this paper, and necessary for the cache-oblivious algorithm in \secref{COBMST}. Therefore, in the following sections we assume that the input has been converted to Z-order. For grid graphs in row-by-row or in column-by-column order this conversion can be done cache-obliviously in $O(\scan(n))$ I/Os by going through the grid in Z-order~\cite{SimpleFlow}.

We call a canonical cluster $Q$ of width $2^h$ an \emph{$h$-cluster}.
For any set of vertices~$Q$, let $G(Q)$ be the subgraph of $G$ that consists of the vertices~$Q$ and all edges $(u,v)$ of $G$ such that $u, v \in Q$. The \emph{boundary} $\boundary(Q)$ of a canonical cluster~$Q$ is the set of vertices of $Q$ that are adjacent to vertices outside $Q$; for all practical purposes we can assume these are simply all vertices in the topmost and bottommost rows of $Q$ and all vertices in the leftmost and rightmost columns of~$Q$.
We define the \emph{$h$-separator set} $V_h$ of $G$ as the set of vertices that are on the boundaries of the $h$-clusters. The vertices in $V_h$ can be numbered consecutively such that for each $h$-cluster $Q$, the vertices of
%on its boundary (the vertices adjacent to vertices in other clusters)
$\boundary(Q)$ get consecutive numbers---say in clockwise order around $Q$, starting from the upper left corner. We call these numbers the \emph{$h$-numbers}. Given $h$ and the row and column number of a vertex, its $h$-number, if it exists, can be computed without any I/O.
Let $G^h$~be the graph with vertex set $V_h$ and edge set $E_h$, which consists of the edges $(u,v)$ of $G$ such that $u$ and $v$ are in different $h$-clusters.

\section{Single-source shortest paths (SSSP)}
Given a directed grid graph $G$ with non-negative edge weights, and a source vertex $s$, we want to compute, for each vertex $t$ of the graph, the distance $\delta_G(s,t)$, that is, the (minimum total weight) of a path in $G$ from $s$ to $t$. Arge et al.\ describe a solution that uses $O(\sort(n))$ I/Os~\cite{PlanarSSSP}. We will first improve this to $O(\scan(n) + \sort(n/\sqrt{M}))$ I/Os. This is $O(\scan(n))$ already for all practical purposes, namely, whenever $n \leq (M/B)^{O(\sqrt{M})}$. Nevertheless we will also see an alternative solution of which the asymptotic I/O-complexity is truely $O(\scan(n))$.

\subsection{Single-source shortest paths in $O(\scan(n) + \sort(n/\sqrt{M}))$ I/Os.}\seclab{SimpleSSSP}
The algorithm consists of three phases. First, we construct a graph $G'$ on the vertices of $G^h$, where $h$ is chosen such that $4^h \leq c M$ for a sufficiently small constant $c$. Second, we run Dijkstra's algorithm on $G'$ to compute the distances from $s$ to each vertex of $G^h$. Third, we use these distances to compute the distances from $s$ to the remaining vertices of $G$. The improvement compared to the algorithm from Arge et al.\ lies in a reduction of the size of the priority queue in the second phase (from $O(n/\sqrt{M})$ to $O(n/M)$) and the number of operations performed on it (from $O(n)$ to $O(n/\sqrt{M})$).

In the first phase, we process the $h$-clusters one by one. We read each $h$-cluster $Q$ into memory. We compute outgoing edges for each vertex $u$ of $\boundary(Q)$: these are the edges of $G$ that lead from $u$ to vertices outside $Q$, and for each other vertex $v$ of $\boundary(Q)$, an edge $(u,v)$ with weight $w(u,v) = \delta_{G(Q)}(u,v)$. The outgoing edges for $u$ are added to $G'$. The edges of $G'$ are stored ordered by the $h$-numbers of their tails $u$.

In the second phase, we will maintain a file $D$ that contains the distance estimates $d[t]$ from $s$ to each vertex $t$ of $V_h$, ordered by $h$-number. It also stores, for each vertex, whether the distance estimate is \emph{final} or \emph{tentative}. Initially, we set all estimates to $\infty$, tentative. Then we read the $h$-cluster $Q$ that contains $s$, and set $d[t] = \delta_{G(Q)}(s,t)$, tentative, for each vertex $t \in \boundary(Q)$. We initialize an I/O-efficient priority queue~\cite{PriorityQueue} with one element: a number identifying $Q$, with key $\min_{t \in \boundary(Q)} \delta_{G(Q)}(s,t)$. This completes the initialization of this phase.

The algorithm now proceeds as follows, until the queue is empty. The identifier of the $h$-cluster $Q$ with minimum key is extracted from the priority queue, let its key be $d$. We read the distance estimates for all vertices of $\boundary(Q)$ from $D$. We choose a vertex $u \in \boundary(Q)$ with tentative distance $d$ and make its distance estimate final. Then we read the outgoing edges of $u$ from $G'$, and for each edge $(u,v)$, we check if $d[u] + w(u,v) < d[v]$; if so, we set $d[v] = d[u] + w(u,v)$, tentative. Finally, we update the priority queue: for each of the $h$-clusters $Q'$ that were touched we set its key to the smallest tentative distance of the vertices of $\boundary(Q')$; if all distance estimates for $Q$ are final, it is not put back into the queue.

In the third phase, we process the $h$-clusters one by one as follows. We read each $h$-cluster $Q$ into memory, and run Dijkstra's algorithm on $Q$, initializing the priority queue with the vertices of $\boundary(Q)$ and their distances as stored in~$D$. We write the distances computed for $Q$ to an output file in Z-order.

It remains to choose $c$: this is done such that each $h$-cluster can be processed entirely in main memory in the first and the third phase of the algorithm.

In total, the above algorithm runs in $O(\scan(n) + \sort(n/\sqrt{M}))$ I/Os. The relative I/O volume is less than~13 (see appendix for details).

\subsection{Single-source shortest paths in $O(\scan(n))$ I/Os.}
To get a solution that requires only $O(\scan(n))$ I/Os, we combine the ideas presented above with the approach taken in the linear-time single-source shortest paths algorithm from Henzinger et al.~\cite{LinearSSSP}.

In the second phase of the algorithm, we will work with a hierarchy of canonical clusters, with sizes $4^{h_0} < 4^{h_1} < ... < 4^{h_k}$, where $h_0 = h$ (as defined above), $h_1 = h_0 + 3$, $h_i = 2^{h_{i-1} - h_{i-2} - 2} \cdot h_{i-1}$ for $i > 1$, and $h_k$ is the smallest value in this series such that the full input fits in a single $h_k$-cluster. For example, with $n = 2^{40}$ and $h = 12$, we get $h_1 = 15, h_2 = 30, k = 2$. For each $h_i$-cluster $Q$ with $i > 0$ in the hierarchy there is a priority queue, which stores an identifier of each unfinished $h_{i-1}$-cluster $Q'$ within $Q$; the key of $Q'$ is the lowest tentative distance of any vertex within~$Q'$. After initializing all queues, the algorithm extracts and processes $h_{k-1}$-clusters from the queue of the single $h_k$-cluster, until this queue is empty. An $h_i$-cluster $Q$ with $i > 0$ is processed as follows:

\clearpage
\begin{algorithmic}[1]
\FOR{$j \leftarrow 1$ to $2^{h_i - h_{i-1} - 1}$}
  \IF{the priority queue of $Q$ is not empty}
    \STATE extract an $h_{i-1}$-cluster $Q'$ with minimum key from the queue of $Q$
    \STATE process $Q'$ recursively, obtaining the lowest updated tentative distances for
           $Q'$ and each of its neighbouring $h_{i-1}$-clusters
    \STATE reinsert/update the keys of $Q'$ and its neighbours in the queues of the $h_i$-clusters that contain them
  \ENDIF
\ENDFOR
\STATE return lowest updated tentative distances for $Q$ and neighbouring $h_i$-clusters
\end{algorithmic}

\medskip
The processing of an $h_0$-cluster $Q$ is similar to before: we find a vertex $u$ with a tentative (non-final) distance matching the key of $Q$, we make that distance final, and explore all edges from $u$. In addition, we also return the lowest updated tentative distances within $Q$ and its neighbouring $h_0$-clusters.

The idea is that most work is done within smaller clusters that have small priority queues, while few updates are done on the larger priority queues of the larger clusters. This is achieved as follows: whenever we go into an $h_i$-cluster $Q$, we first do some local work (for $2^{h_i - h_{i-1} - 1}$ iterations) before we consider moving our attention to another $h_i$-cluster. This is risky: while we keep working on~$Q$, we ignore the fact that shorter paths to some vertices of $Q$ may be found by taking a route through another cluster; the work done inside $Q$ may turn out to be premature and the algorithm would redo it later. Henzinger et al.~\cite{LinearSSSP} show how to prove that this does not happen too often in their setting; the same techniques can be applied to the hierarchy as defined by our choice of the sequence $h_0,h_1,...,h_k$. As a result, our algorithm runs in $O(\scan(n))$ I/Os (for details, see the appendix).

\section{Breadth-first traversals}\seclab{BFS}
Given a directed, unweighted grid graph $G$ and a source vertex $s$, we want to put the vertices of $G$ in order such that for any two vertices $u, v$ such that $u$ appears before $v$, we have $\delta_G(s,u) \leq \delta_G(s,v)$. This can be achieved by computing single-source shortest paths and then sorting the vertices by increasing distance from~$s$. In this section we show how to do both steps in $O(\scan(n))$ I/Os.

To compute the shortest path distances, we can use the solutions presented above. Alternatively, we use the first algorithm with a priority queue implementation that supports insertions and deletions in $O(1/B)$ I/Os on average. This is possible because at any time during the algorithm, the keys in the priority queue do not differ more than $\Theta(M)$. Details are found in the appendix.

To sort the vertices in $O(\scan(n))$ I/Os, we first cut the graph into $O(n/\sqrt{M})$ chunks such that within each chunk, the distance of its vertices from $s$ differs by $O(\sqrt{M})$. Then we sort the chunks by their distance from $s$. Finally we read the chunks in order, and write out the vertices in order of distance from $s$. %The details are as follows.

To cut the graph into chunks, we process all $h$-clusters one by one. In each $h$-cluster $Q$, we construct a `local' breadth-first search tree, by connecting each non-boundary vertex $v \neq s$ at distance $d$ from $s$ to a neighbour $u$ at distance $d-1$ from $s$. This results in a forest of $\Theta(\sqrt{M})$ trees rooted at the vertices of $\boundary(Q)$ (and $s$, if it lies in $Q$). Any trees that are higher than $\sqrt{M}$ levels are cut into smaller chunks: a tree $T$ of size $t$ is cut into $O(t/\sqrt{M})$ chunks of height at most $\sqrt{M}$, such that each chunk is a connected subgraph of $T$. We write all chunks to a file $C$, describing each chunk by its vertex $u$ which is closest to $s$, the distance $\delta_G(s,u)$ of $u$ from $s$, and the tree rooted at $u$. Furthermore, we write, for each chunk, its address in $C$ and its distance $\delta_G(s,u)$ to a file $A$.

After processing all clusters, we sort $A$ by increasing distance, using an I/O-optimal sorting algorithm or a standard implementation of radix sort with base $\Theta(n/M^{3/4})$---whichever is faster for our specific values of $M$, $B$ and $n$. Finally we put all vertices in order as follows. The idea is that we maintain a set of stacks $S_0,...,S_{n-1}$, such that $S_i$ contains vertices at distance $i$ from $s$; initially all stacks all empty. We now go through the sorted list of chunk addresses $A$. For each chunk we do the following: we retrieve the chunk from~$C$; let $d$ be its distance from $s$. We remove all vertices from the stacks $S_0,...,S_d$ (in order) and output them, and push all vertices from the current chunk on the stacks corresponding to these vertices' distances from $s$. After reading all chunks, we output all vertices that remain on the stacks.

In total, the algorithm uses $O(\scan(n))$ I/Os. The relative I/O volume is less than~100. The high relative I/O volume, compared to the algorithm for shortest paths, is not due to more I/O, but to a more efficient representation of the input.
For analysis and calculations, see the appendix.

\section{Minimum spanning trees (MST)}
Given a connected undirected grid graph $G$ with edge weights, we want to compute a spanning tree of $G$ of minimum total weight. Chiang et al.\ claim a solution in $O(\sort(n))$ I/Os~\cite{PlanarMST}.

Our solutions below are based on the following observation (proof in appendix). Let $G_1,...,G_k$ be the graphs induced by the $h$-clusters $Q_1,...,Q_k$ of $G$, and let $T_1,...,T_k$ be minimum spanning trees of $G_1,...,G_k$, respectively. Then the union $U$ of $T_1,...,T_k$ and $G^h$ contains an MST of $G$. We can use this to compute an MST of $G$ by first computing minimum spanning trees $T_1,...,T_k$ of $G_1,...,G_k$, and then computing an MST of $U$. Two observations will help us speeding up this computation. First, any branches of $T_i$ that do not contain any vertices of $V_h$, must be included in any spanning tree of $U$, because the vertices in such `dead ends' do not have any other connections to the rest of $U$. Second, consider any chain $\gamma = u_0, ..., u_m$ in a tree $T_i$, such that all vertices $u_k (0 < k < m)$ have degree two (after removing dead ends) and are not in $V_h$. In $T$, all vertices of $\gamma$ must be connected along $\gamma$ to either $u_0$ or $u_m$; therefore all edges of $\gamma$, except one, must be included in any spanning tree of $U$. When constructing a spanning tree of $U$, the only decision that we have to make for $\gamma$, is whether to omit one of its heaviest edges. Therefore we can compute an MST $T$ of $G$ as follows.

\subsection{Minimum spanning-trees cache-aware in $O(\scan(n) + \sort(n/\sqrt{M}))$ I/Os.}\seclab{SimpleMST}
For each cluster $G_i$ we compute an MST, we remove all dead ends as described above, and we contract each chain as described above, replacing each chain $u_0,...u_m$ by a single edge $(u_0,u_m)$ with weight $\max_{0 \leq i < m} w(u_i,u_{i+1})$; we call the resulting tree~$T'_i$. Then we compute an MST $T'$ of $U'$, where $U'$ is the union of $T'_1,...,T'_k$ and~$G^h$. Finally, we construct $T$ from $T'$ as follows. We process the graph cluster by cluster. For every chain $(u_0,...,u_m)$ of which the representative edge $(u_0,u_m)$ is \emph{not} included in $T'$, we add all edges except one heaviest edge of the chain to $T$; for every chain $(u_0,...,u_m)$ of which the representative edge $(u_0,u_m)$ \emph{is} included in $T'$, we insert the complete chain in $T$; we also insert all dead ends that had been removed.

%The algorithm can easily be adapted to compute minimum spanning forests, that is, a forest that consists of a minimum spanning tree of each connected component of the graph.

%\paragraph{I/O-complexity.}
Using Prim's algorithm~\cite{CLRS} with an I/O-efficient priority queue~\cite{PriorityQueue} to compute $T'$ from $U'$, the complete algorithm runs in $O(\scan(n) + \sort(n/\sqrt{M}))$ I/Os. When $T'$ can be computed from $U'$ in main memory, the relative I/O volume is approximately 1.5. With 2 GB of memory and 64 bits edge weights, this is the case for files up to 128 GB. Otherwise the algorithm below seems better.

\subsection{Minimum-spanning trees cache-obliviously in $O(\scan(n))$ I/Os.}\seclab{COBMST}
%We can obtain an algorithm that runs in $O(\scan(n))$ I/Os, does not need any advanced data structures, and is even cache-oblivious, as follows.
The canonical clusters of the graph form a hierarchy.
We first process this hierarchy in post-order (bottom-up), using two initially empty stacks. The \emph{connections stack} is used to pass spanning trees of clusters and edges between clusters to their parents in the hierarchy. The \emph{expansions stack} is used to store information on branches that have been removed and chains that have been contracted. The result of processing a cluster will be that a spanning tree of its boundary vertices (possibly including some interior vertices) is put onto the connections stack, together with the edges that connect these boundary vertices to other clusters. A cluster $Q$ is processed as follows: we pop the results from the four children of $Q$ from the connections stack; from these results we construct a graph $U$ that consists of the children's spanning trees and the edges between boundary vertices of different children; we compute a spanning tree $T$ of $U$ (using Prim's algorithm and a standard internal-memory heap as a priority queue~\cite{CLRS}); we identify dead ends and chains as described above and push them onto the expansions stack, then we remove the dead ends and contract the chains in $T$; and finally we push the results for $Q$ onto the connections stack.

Then we process the hierarchy in reverse order (top-down). We will make sure that whenever we are about to process a cluster $Q$, the top of the connections stack contains the edges of the spanning tree of $Q$ that need to be expanded---note that this is true initially because of the end result of the bottom-up phase.
%we will first process the cluster that contains the full graph, and the last thing we did in the first phase was pushing a spanning tree on its boundary vertices onto the connections stack.
To process $Q$, we pop the edges to be expanded from the connections stack, we pop the dead ends and chains in~$Q$ from the expansions stack, we expand the chains and add the dead ends to the spanning tree, and finally we divide the resulting tree among the children of $Q$: we push each child's part of the tree (including the edges from that child's cluster to neighbouring clusters) onto the connections stack. When $Q$ is a single vertex, we write its outgoing edges to the output file instead of pushing them onto the stack.

%Again, the algorithm is easily adapted to compute a minimum spanning forest.

%\paragraph{I/O-complexity.}
In total, we use $O(\scan(n))$ I/Os and relative I/O volume approximately~2.5.
For analysis and calculations, see the appendix.

\section{Topological sorting}\seclab{TopSort}
Given a directed acyclic grid graph $G$, we want to sort its vertices such that if there is a path from $u$ to $v$, then $u$ comes before $v$ in the sorted list. An algorithm for topological numbering (but not sorting) in $O(\scan(n))$ I/Os is known~\cite{PlanarTopSort}. It is based on first computing the lengths of the longest paths to each vertex of a suitable separator (such as our set $V_h$). Unfortunately this constitutes a significant overhead: as with our breadth-first traversal algorithm, an intermediate weighted graph $G'$ is used that is about 16 times the size of the input graph in bytes. Below we describe an approach that reduces this overhead by working with connectivity information only instead of with path lengths. We also describe how to sort (not only number) the vertices into topological order in $O(\scan(n))$ I/Os. The techniques used will prepare us for the next section on time-forward processing. For ease of description, we assume the graph is connected (details about how to handle disconnected graphs are in the appendix).

As with the algorithm in Section~\ref{sec:SimpleSSSP}, we first construct a graph $G'$ on the vertices of $G^h$ (for a well-chosen $h$). Then we compute a numbering of the vertices of $G'$; we use this numbering to cut up $G$ into chunks; and finally we sort the chunks, read the chunks in order and output all vertices. The first two steps of the algorithm are very similar to the shortest paths algorithm of \secref{SimpleSSSP}. The difference is that instead of using a weighted graph $G'$, we use an unweighted graph $G'$, and instead of computing distances from the source for every vertex of $G'$, we compute a topological numbering of the vertices of $G'$ that assigns a unique number to each vertex of $G'$. In the end we permute the topologically ordered list of vertices of $G'$ into a file $R$ indexed by $h$-number, giving a topological number $r(u)$ for each vertex $u$ of $G'$. For details, see the appendix.

To cut $G$ into chunks, we process all $h$-clusters one by one. In a cluster $Q$, each vertex $v$ of $Q$ is assigned a chunk number as follows. We will use an algorithm that assigns numbers to vertices incrementally. At any time during this algorithm, let $P(v)$ (predecessors of $v$) be the set of vertices that already have a chunk number and \emph{from} which there is a path to $v$ in $G(Q)$, and let $S(v)$ (successors of $v$) be the set of vertices that already have a chunk number and \emph{to} which there is a path from $v$ in $G(Q)$. We first give each boundary vertex $u$ chunk number $r(u)$. Then we give each vertex $v$ such that $P(v)$ is not empty, the chunk number of the highest-numbered vertex in $P(v)$. After that, we give each vertex $v$ such that $S(v)$ is not empty, the chunk number of the lowest-numbered vertex in $S(v)$. We repeat these two steps until all vertices have been numbered.
%no more vertices can be numbered. Some unnumbered vertices may remain, we call them the left-over set $L$. The graph induced by $L$ may have a number of weakly-connected components; we assign its vertices to chunks such that each weakly-connected component $K$ of $G(L)$ is in the same chunk as a vertex $u$ such that $v \in K$, and $u$ is the left neighbour of $v$ in the grid (although there is no edge between $u$ and $v$).
Each chunk is sorted topologically by itself, and all chunks are written to a single file $C$; we also produce a list $A$ that contains the address of each chunk in $C$ together with the topological number $r(u)$ of the chunk's boundary vertex $u$; this list $A$ is subsequently permuted so that it is ordered by topological number $r(u)$.

Finally, to output all vertices of $G$ in topological order, we simply output each chunk from $C$ in the order of their reference in $A$. The correctness of the algorithm is proven in the appendix.

In total, we use $O(\scan(n))$ I/Os and relative I/O volume less than~10.
For analysis and calculations, see the appendix.

\section{Planar time-forward processing}\seclab{TFP}
Given a directed acyclic plane grid graph $G$, we want to compute a label $\phi(v)$ of constant size for each vertex $v$; we assume we have an oracle that computes $\phi(v)$ in constant time and without I/O when the labels of the in-neighbours of $v$ are in memory. \emph{Time-forward processing} computes these labels by processing the graph in topological order, using a data structure that stores messages from one vertex to another. The idea is that whenever we compute a label $\phi(u)$, we send a message with this label from $u$ to each of its out-neighbours; thus, when a vertex $v$ is processed, the labels of its in-neighbours will be available in the data structure as messages sent to $v$. Standard solutions use priority queues to pass the messages, resulting in an I/O-complexity of $O(\sort(n))$ I/Os~\cite{PriorityQueue}.

We can obtain a solution in $O(\scan(n))$ I/Os by extending the topological-sorting algorithm from the previous section. When we make the file of chunks~$C$, we leave some extra space: for each pair of chunks such that messages must be sent from one chunk to the other, we reserve the required amount of space in the file. We annotate all edges of the graph with the address in the file where the message to be sent along that edge should be placed. Next, when we go through all chunks in topological order, we read and write messages to/from the indicated locations in the file, and produce the labels.

More precisely, for inter-cluster messages (messages passed between chunks in different clusters), we reserve space at the beginning of the file. These messages always go from one vertex of $V_h$ to an adjacent vertex of $V_h$. An address for such a message can simply be computed directly from the locations of the vertices involved, such that we reserve one address for each horizontal edge of $E_h$ (regardless of the direction), one address for each vertical edge, and one for each pair of diagonal edges (of which only one edge can be in $G$; recall that $G$ is a plane graph in this section).
%We can assign the addresses such that all incoming and outgoing inter-cluster messages for a particular vertex $v \in V_h$ are found in two out of three consecutive addresses in the file, except when $v$ is at a corner of a cluster. Thus 1 I/O will suffice in the vast majority of cases to send or receive these messages for any vertex $v \in V_h$.
For intra-cluster messages (messages passed between chunks in the same cluster), we reserve space in the file directly after the description of the receiving chunk. Addresses for the incoming intra-cluster messages are assigned in clockwise order around the boundary of the chunk. The description of the chunk lists the vertices of the chunk in topological order, with for each vertex: a vertex identifier; from which directions to receive messages; in which directions to send messages; and for all outgoing intra-cluster messages, the address where to place it, relative to the starting address of the chunk. %Note that the addresses for incoming messages can always be computed from the chunk description.
To be able to output the computed labels in a structured way, we reserve space in a label file $L$ that is ordered by $h$-cluster. Within each cluster, we will store the labels chunk by chunk, for each vertex listing a vertex ID %(local, relative to the cluster)
and its label. In the chunk file~$C$, we specify for each chunk a starting address in the label file~$L$.

Using the address file $A$ (see \secref{TopSort}), we can now traverse the chunks in topological order. We read each chunk with its incoming intra-cluster messages into memory, read the incoming inter-cluster messages from the beginning of the file, apply time-forward processing to the chunk in memory, write the results (vertex IDs and labels) to $L$, and write the outgoing messages for vertices in other chunks to their respective addresses. Upon completion, we go through the label file $L$ cluster by cluster, and output the computed labels in Z-order.

%\paragraph{I/O-complexity.}
The complete algorithm uses $O(\scan(n))$ I/Os. The key part of the analysis is that, although non-sequential I/O is needed to put messages directly in the chunk file, the cost is only one non-sequential I/O for each pair of chunks such that messages are sent between them. Because the input is a plane graph, and the chunks induce connected subgraphs of it, the number of chunk pairs between which messages are passed is linear in the number of chunks, which is $O(n/\sqrt{M})$.

Assuming 64 bits labels, the relative I/O volume is less than 70 in the worst case---this is roughly in the same order of magnitude as what priority-queue-based time-forward processing needs if a topological numbering is given.
For analysis and calculations, see the appendix.

\section{Euler tours}

Given a grid graph that forms a tree, we want to compute a so-called `Euler' tour of the tree that traverses each edge of the tree twice: once in each direction. With standard techniques this can be done in $O(\sort(n))$ I/Os~\cite{Survey}.

Our approach, in $O(\scan(n))$ I/Os, is similar to topologically sorting a directed acyclic graph. Again, we use a decomposition of the graph into $h$-clusters, but the specification of $G'$ is different. As the possible points of entry into an $h$-cluster $Q$ we do not take the vertices of $\boundary(Q)$, but the incoming edges from a vertex outside $Q$ to a vertex of $\boundary(Q)$. Thus, there are roughly $12 \cdot 2^h$ possible ways to enter $Q$. Similarly, there are roughly $12 \cdot 2^h$ ways to exit $Q$. Because only one point of exit can be connected to any point of entry, we can use a very efficient representation of $G'$: instead of storing for each pair of entry and exit points whether a connection is present, we simply store one exit point for every entry point. Chunks are constructed and ordered as with topological sorting.

The relative I/O volume for this approach is less than 15.
For analysis and calculations, see the appendix.

\section{Discussion}
We saw a number of techniques that may help solving problems on grid graphs efficiently, including hierarchies of smaller priority queues (for shortest paths); hierarchies of separators (for shortest paths and minimum-spanning trees); and using the results of pre-computations (on graphs $G'$) to cut pre-defined memory-size clusters into smaller, connected chunks that can be reordered efficiently.

\paragraph{Practical potential}
Most of our algorithms seem to have good relative I/O volumes. For computing SSSP, MST, topological sorting and Euler tours, the I/O volume estimates are such that any algorithm that would include a couple of off-the-shelve sorting passes over all vertices would be hard-pressed to beat them. However, the I/O volume calculations for most of our algorithms critically depend on our choice of the block size $B$, which effectively puts a penalty on non-sequential I/O equal to the cost of reading and processing 128 KB of data. If, on a fast machine with a modern had disk, 1 MB is a more reasonable penalty, then the relative I/O volumes go up. In fact, for fixed $n$, the I/O volume of most of our algorithms is linear in $B/\sqrt{M}$---even if this factor is, by the tall-cache assumption, $O(1)$. In contrast, the I/O volume of algorithms based on simple scans, sorts, and priority queues is linear in $1/\log\frac{M}{B}$. On the other hand, if, with other hardware, $M$ is larger or $B$ is effectively lower (such as with solid-state drives), then our new algorithms benefit from this.
% 4 ms; 1000 Mbit/s -> 4 Mbit = 512 MB

Our cache-oblivious algorithm to compute minimum-spanning trees is not affected by the block size: it only accesses input and output files sequentially and operates on two stacks. This is efficient regardless of the block size.

The practical potential of our algorithms is not only determined by their I/O-efficiency, but also by the number of operations on the CPU. The cache-oblivious algorithm for minimum-spanning trees needs only $O(n)$ operations: this follows directly from the fact that it runs cache-obliviously in $O(\scan(n))$ I/Os. For SSSP and topological sorting (and hence, also for BFS, time-forward processing and Euler tours), the most CPU-intensive part of the algorithm is the computation of the graph $G'$. If the input graph is planar, we can use Klein's multiple-source shortest paths algorithm~\cite{MSSP} to compute the edges across each cluster in $O(M \log M)$ time. Thus we obtain a total running time of $O(n \log M)$. However, if the input graph is not planar, we may not be able to use Klein's algorithm and we may need as much as $\Omega(M \sqrt M)$ time per cluster. This may be prohibitively expensive. To get an efficient algorithm for non-planar graphs, it may be necessary to sacrifice I/O-efficiency for CPU-efficiency by choosing the cluster sizes much smaller than one would do if I/O-efficiency were the only concern---and the resulting implementation may be much slower than an optimal implementation for planar graphs.

\paragraph{Applications.}
Apart from the problems discussed in this paper, several other fundamental and applied graph problems could be solved with the techniques presented above. For example, when a directed grid graph forms a forest, in which each edge is directed from a child to its parent, the MST technique can be adapted to compute the number of descendants of each vertex cache-obliviously in $O(\scan(n))$ I/Os. The key adaptation is that in the bottom-up phase, when dead branches are pruned and chains are contracted, vertices get a weight that corresponds to the number of vertices of incoming dead branches and incoming chains that were removed. In the top-down phase, these weights can be used to compute subtree sizes. An immediate application is \emph{single-directional flow accumulation} on a geographic terrain, modelled as a grid in which, at each vertex, any water that flows there from above, continues its way to (at most) one neighbour in the grid. %The goal is now to compute, for each point of the terrain, how much water flows through each point, after a certain amount of rain has fallen.
Our report on %simple I/O-efficient
flow accumulation describes the details and experiments that show that the approach is very effective in practice~\cite{SimpleFlow}.
If water is distributed to multiple neighbours, we get a directed acyclic flow network on the grid and our time-forward processing technique in $O(\scan(n))$ I/Os could be used.% as an alternative to the $O(\sort(n))$ I/O algorithms currently implemented in software such as TerraStream~\cite{TerraStream}. %

The appendix discusses some more applications.

\paragraph{Topics for further research.}
In some cases there is a striking discrepancy between labelling and ordering.
For example, by adapting the MST technique
%, one can label the connected components of a grid graph,
one can compute distances along a single path through a grid graph \emph{cache-obliviously} in $O(\scan(n))$ I/Os. However, for actually
%sorting the graph into its components, or for
producing the vertices of the path in order, in $O(\scan(n))$ I/Os, I only know a \emph{cache-aware} solution at this time.

A depth-first traversal of an \emph{undirected} planar graph can be computed in $O(\sort(n))$ I/Os~\cite{PlanarUndirectedDFS}. Combining the ideas of Arge et al.~\cite{PlanarUndirectedDFS} with our new breadth-first traversal algorithm, we may be able to improve the bound to $O(\scan(n))$ I/Os for (plane) grid graphs. \emph{Directed} graphs may also be investigated~\cite{PlanarDirectedDFS}.

Another question that remains is whether time-forward processing can also be done in $O(\scan(n))$ I/Os on a \emph{non-planar} grid graph.

To some extent our algorithms may be adapted to planar, \emph{non-grid} graphs. The relevant property of a graph in this context is that for any $h$, a graph with $n$ vertices can be partitioned into $O(n/4^h)$ clusters (sharing vertices on their boundaries), such that each cluster has $O(2^h)$ vertices on its boundary and $O(4^h)$ vertices in total; each boundary vertex is adjacent to $O(1)$ clusters; and finally, if the boundary vertices are grouped into sets that are adjacent to the same clusters, then there are only $O(n/4^h)$ groups~\cite{PlanarSSSP}.
Most of our algorithms should be adaptable to graphs that are given partitioned in this way with $4^h \leq cM$ for small enough $c$, their vertices ordered cluster by cluster; %and boundary set by boundary set.
our $O(\scan(n))$ I/O algorithms for SSSP and MST would need a hierarchical partitioning.
For planar graphs in which all vertices have degree at most three, the required partitioning %exists and it
can be found in $O(\sort(n))$ I/Os~\cite{PlanarSeparator}.
Planar graphs with vertices of higher degree can be represented by graphs with vertices of degree at most three with standard tricks. However, in the case of directed acyclic graphs, a challenge is posed by the fact that the representative graphs are not necessarily planar. For the purposes of topological sorting, a proper partitioning can still be found~\cite{NearPlanar}, but our time-forward processing algorithm may not be applicable.
%if the input graph is not planar or if vertex labels need to be computed from the labels of more than a constant number of neighbours.

\clearpage
\appendix

\section{Details, proofs and calculations}

\subsection{Introduction}\seclab{apx:intro}

This appendix describes details of the algorithms that could not be presented in the main text of the paper, essential ingredients for proofs of the correctness and the asymptotic I/O-complexity, and calculations of the relative I/O volume of most algorithms.

The proofs of correctness and asymptotic I/O-complexity are, of course, presented here without any reservations.

The calculations of relative I/O volume are, by their nature, somewhat subjective. These calculations depend on implementation choices in the details of the algorithms and in the way information is encoded in binary files. I have tried to make these choices so that the algorithms and the representations of the input and output files are as efficient as possible but not overly complicated. In particular, for ease of description, in this paper we define the vertices of $G^h$ such that there is a double row/column of boundary vertices along the boundary of each pair of clusters. This has the advantage that clusters are perfectly aligned with the squares that appear in the recursive definition of the Z-order. Alternatively, one may use a single row/column of boundary vertices. There are advantages and disadvantages to both approaches, and I have not fully explored the effects on I/O volume. In any case the results of the I/O volume calculations should be such that they are generally within a factor two or three from optimal; this should be enough to contribute to a judgement whether an algorithm has potential for practical applications or not.

The structure of the appendix follows the structure of the main text. \secref{PrelimApx} presents some calculations that are applicable to all algorithms; Sections~\ref{sec:SSSPApx} to \ref{sec:EulerApx} discuss the algorithms from the corresponding sections in the main text; ~\secref{DiscussionApx} mentions a few more applications for our techniques.

\subsection{Preliminaries}\seclab{PrelimApx}

An unweighted, directed or undirected grid graph can be represented by specifying for each vertex, to which of its eight neighbours in the grid it has an edge in $G$. Such a graph can therefore be specified in 8 bits per vertex, for a total input size of $n$ bytes. For a weighted, directed graph we use 64-bits numbers for the weight of each possible edge, and we get an input size of $64n$ bytes. For a weighted, undirected graph, it suffices to store each edge only at say, its left end point (or, for a vertical edge, at its top endpoint), so $32n$ bytes suffice.

To keep the calculations simple, we assume the input graph is a square of $n = 2^{40}$ vertices in $2^{20}$ rows and $2^{20}$ colums. If the input graph is a rectangle with a very small or very large width/height ratio, then this could create small deviations in the calculations due to rounding effects. Recall from \secref{Intro} that we assume $M = 2^{31}$ bytes (2 GB) (or slightly more), and $B = 2^{17}$ bytes (128\,KB).

An $h$-cluster $Q$ contains $4^h$ vertices, of which $4 \cdot 2^h - 4$ vertices lie on the boundary $\boundary(Q)$. The number of $h$-clusters in a square graph of $n = 2^{40}$ vertices is $n / 4^h$, so $G^h$ contains almost $n / 4^h \cdot 4 \cdot 2^h = 4n / 2^h$ vertices. One I/O per vertex of $G^h$ adds up to an I/O volume of $4n / 2^h \cdot B = 4n / 2^h \cdot 2^{17} = 2^{19-h} n$ bytes.

When we construct a graph $G'$ with vertex set $V_h$, that contains all edges $E_h$ and one edge $(u,v)$ for every pair $u, v$ such that $u$ and $v$ are on the boundary $\boundary(Q)$ of the same $h$-cluster $Q$, then the out-degree of each vertex $u$ in $G'$ is at most $4 \cdot 2^h - 5$ (for connections to other boundary vertices in the same cluster) plus 5 (for edges in $E_h$, there can be five if $u$ lies on a corner of its cluster), makes at most $4 \cdot 2^h$. Storing the adjacency list of any vertex in $G'$ as a bit vector with one bit per edge, a single cluster needs almost $(4 \cdot 2^h)^2 = 16 \cdot 4^h$ bits, that is, $2 \cdot 4^h$ bytes, which adds up to almost $2n$ bytes in total for $G'$.

\subsection{Single-source shortest paths}\seclab{SSSPApx}\seclab{SimpleSSSPApx}

\subsubsection{Single-source shortest paths in $O(\scan(n) + \sort(n/\sqrt{M}))$ I/Os}

\paragraph{Correctness.}
The correctness of the algorithm is easy to analyse with the same arguments as in the original algorithm by Arge et al.~\cite{PlanarSSSP}.

\paragraph{Asymptotic I/O-complexity.}
In the first phase, we read $O(n/M)$ clusters of size $O(M)$, and process each of these in memory. Each cluster has $O(\sqrt{M})$ vertices on its boundary; thus we create, for each cluster, $O(M)$ edges in $G'$. Thus, $G'$ has total size $O(n/M \cdot M) = O(n)$. We read the input file sequentially, and we write $G'$ sequentially, in $O(\scan(n))$ I/Os.

In the second phase, each cluster of $G'$ is extracted from the priority once for each of its vertices. There are $O(n/M)$ clusters, each with $O(\sqrt{M})$ vertices in $G'$; thus we have $O(n/\sqrt{M})$ extractions.

After extracting a cluster $Q$, we look up the distance estimates for the vertices of $\boundary(Q)$ to find the vertex $u$ whose distance is to be made final. Since the distance estimates file $D$ is ordered by $h$-number, that is, cluster by cluster, this can be done in $O(\scan(|\boundary(Q)|)) = O(\scan(\sqrt{M}))$ I/Os. Then we read the list of outgoing edges of $u$ from $G'$ in another $O(\scan(\sqrt{M}))$ I/Os, and update distance estimates in $D$ for vertices in at most four (but usually two) clusters in another $O(\scan(\sqrt{M}))$ I/Os\footnote{Four clusters may be accessed when $u$ lies on the corner of a cluster.} Finally, we update the keys of at most four (but usually two) clusters in the priority queue. Excluding the priority queue operations, the total cost over all extractions is $O(n/\sqrt{M} \cdot \scan(\sqrt{M})) = O(\scan(n))$ I/Os. In total, there are $O(|V_h|) = O(n/\sqrt{M})$ priority queue operations. Using an I/O-efficient priority queue, these operations can be done in $O(\sort(n/\sqrt{M}))$ I/Os.

In the third phase, we read the input and $D$ sequentially and write the output file sequentially in $O(\scan(n))$ I/Os.

In total, the complete algorithms takes only $O(\scan(n) + \sort(n/\sqrt{M}))$ I/Os.

\paragraph{Relative I/O-volume.}
The input graph has size $64n$ (see \secref{PrelimApx}). The output consists of a list of 64-bits numbers, for a total size of $8n$ bytes.

We choose $h = 12$, so that each $h$-cluster consists of $4096 \times 4096$ vertices. With 64-bits numbers for the weights of the edges, we will need $64 \cdot 2 \cdot 4^h = $ 2 GB per cluster for $G'$ (see \secref{PrelimApx}). In total, $G'$ has size $64 \cdot 2n = 128n$ bytes.

In the first phase of the algorithm, we read the input file ($64n$) and write $G'$ ($128n$).

In the second phase, we mainly access $G'$, $D$, and the priority queue. In $G'$, the adjacency list of one vertex stores roughly $4 \cdot 2^h$ 64-bits numbers, in $32 \cdot 2^h = 2^{17} = B$ bytes, exactly one block. In $D$, we store $4 \cdot 2^h$ 64-bits distance estimates per cluster, that is, one block per cluster (we can use the sign bits to mark a distance as tentative or final). In this phase of the algorithm, each adjacency list in $G'$ is read exactly once, for a total I/O volume of $128n$. Whenever we access a block of $G'$, we also read and write two blocks of $D$, for a total I/O volume of $512n$. The priority queue stores one weight (64 bits) and one number identifying an $h$-cluster (32 bits suffice) per $h$-cluster; there are $n / 4^h$ of these clusters, so $12n / 4^h = 12 \cdot 2^{40 - 2h} = 12 \cdot 2^{16}$ bytes, that is, 768 KB suffices to store the priority queue; it can therefore easily be kept in main memory.

In the third phase of the algorithm, we read the input file ($64n$) and $D$ (size $8 \cdot 4n / 2^h = n/128$, negligible) sequentially, and write the output file ($8n$) sequentially.

In total we get an I/O volume of approximately $192n + 640n + 72n = 904n$, and the total size of the input and output files is $64n + 8n = 72n$. Thus the relative I/O volume is approximately $904 / 72 < 13$.

\subsubsection{Single-source shortest paths in $O(\scan(n))$ I/Os}

\paragraph{Asymptotic I/O-complexity.}
Henzinger et al.~\cite{LinearSSSP} prove that we can charge all wasted, premature work to pairs of an $h_i$-cluster ($0 < i < k$) and a vertex on its boundary, such that each pair is charged with at most one wasted call on an $h_i$-cluster of the algorithm on page~5. For this purpose, wasted work that was attempted but did not happen because priority queues ran empty, can also be charged in this way; that is, we may assume that each call of the algorithm on an $h_i$-cluster leads to $2^{h_i - h_{i-1} - 1}$ calls on $h_{i-1}$-clusters, in recursion ultimately leading to $2^{h_i - h_0 - i}$ calls on $h_0$-clusters. Since in total, there are $O(n / 2^{h_i})$ vertices on the boundaries of $h_i$-clusters, the total number of (attempted) wasted calls on $h_0$-clusters is bounded by $\sum_{i=1}^{k-1} (n / 2^{h_i} \cdot 2^{h_i - h_0 - i}) = O(n / 2^{h_0}) = O(n/\sqrt{M})$. The total number of non-wasted calls on $h_0$-clusters is equal to the number of vertices of $G'$, which is also $O(n/2^{h_0}) = O(n/\sqrt{M})$.

As with the single-source shortest path algorithm given before, each call on an $h_0$-cluster requires only $O(\scan(\sqrt{M}))$ I/Os; times $O(n/\sqrt{M})$ calls makes $O(\scan(n))$ I/Os in total.

Each call on an $h_1$-cluster does a constant number of priority queue operations on queues of constant size (at most $4^{h_1} / 4^{h_0} = 64$) for each call on an $h_0$-cluster. With a standard (not necessarily I/O-efficient) heap implementation, this results in $O(n/\sqrt{M})$ operations and therefore $O(n/\sqrt{M}) = O(\scan(n))$ I/Os.

For $1 \leq i \leq k$, the amount of work done in a call on an $h_i$-cluster, not counting the work done in the recursive calls made from it, is dominated by $O(2^{h_i - h_{i-1}})$ priority queue operations on queues of size $4^{h_i - h_{i-1}}$. Assuming a standard (not necessarily I/O-efficient) heap implementation, these operations take $O(h_i)$ time each. Charging them to the $2^{h_i - h_0 - i}$ calls on $h_0$-clusters that are made in recursion, each call on an $h_0$-cluster is charged with an amount of work of $O(h_i 2^{h_i - h_{i-1}} / 2^{h_i - h_0 - i}) = O(h_i 2^{h_0 - h_{i-1} + i})$ in calls on $h_i$-clusters. Using $h_i / h_{i-1} = 2^{h_{i-1} - h_{i-2} - 2}$, we find that the charges induced by $h_i$-clusters (for $1 < i < k$) are only half of the charges induced by $h_{i-1}$-clusters; thus the charges from $h_1$-clusters dominate, which were analysed above.

Thus, the second phase of the algorithm (the computations on $G'$) require only $O(\scan(n))$ I/Os in total, and this brings the total for the complete algorithm down to $O(\scan(n))$ I/Os as well.

\subsection{Breadth-first traversals}\seclab{BFSApx}

\paragraph{Details of the algorithm.}
To compute the shortest paths distances in an unweighted graph, we can use the solutions for single-source shortest paths presented above. Alternatively, we use the first algorithm (the algorithm that uses only a single level of clusters) with another priority queue implementation, as follows. At any time during phase two of the single-source shortest path algorithm, let $u$ be the most recently extracted vertex, and let $d'[u]$ be $d[u]$ rounded up to the nearest multiple of $2^h$. The priority queue is stored in a set of unordered lists $L_{d[u]},...,L_{d'[u]}$ and $H_0,...,H_{2^h}$, such that $L_i$ contains all elements with key $i$, and $H_i$ contains all elements with keys more than $d'[u] + i 2^h$ and at most $d'[u] + (i+1) 2^h$. In principle, the priority queue does not support updates to any keys stored in it: when the key of an $h$-cluster $Q$ needs to be updated, we simply reinsert $Q$ and leave the old copy in the queue. The old copy is simply discarded when it is extracted and it turns out that $\boundary(Q)$ does not have a vertex with tentative distance equal to the key of $Q$. Whenever $d'[u]$ increases, the lists $L$ and $H$ are updated by distributing the first non-empty list $H_0$ over the lists $L$, and renumbering the remaining lists $H$.

In practice, there is a little catch. With a priority queue implementation that supports updates to the keys stored in it, we could make sure that each $h$-cluster is stored in the queue only once, so that the size of the queue is limited to $O(n/M)$. In practice this means the queue fits in memory and does not cause any I/O at all. With the alternative implementation given above, the queue may grow to size $O(n/\sqrt{M})$ (below we will see that the number of operations on the queue is bounded by $O(n/\sqrt{M})$). For very large inputs, this may not fit in memory. Although the I/O-complexity of the whole algorithm will be guaranteed to be $O(\scan(n))$ nonetheless, it will harm its performance. We may remedy this by modifying the alternative priority queue as follows: when we want to update the key of a cluster $Q$, then, if the part of the priority queue that contains the reference to $Q$ is currently in memory, we update it, otherwise we leave the old entry where it is and reinsert~$Q$ with the new key.

\paragraph{Correctness.}
In the current application of the single-source shortest paths computation, the weights of all edges of $G$ are 1, and therefore the weights of all edges of $G'$ are natural numbers and at most $4^h-1$, the size of an $h$-cluster minus 1. Thus, at any time during the algorithm, the keys of the $h$-clusters in the priority queue are natural numbers that differ by at most $4^h - 1$. Therefore the lists $L$ and $H$ of the alternative priority queue implementation suffice to store the elements in the queue.

\paragraph{Asymptotic I/O-complexity.}
According to the analysis in \secref{SimpleSSSPApx}, computing single-source shortest paths takes $O(\scan(n) + \sort(n/B))$ I/Os. The second term was due to the priority queue operations. However, with the new priority queue implementation, these operations will be more efficient. The number of lists $L$ and $H$ is only $O(\sqrt{M}) = O(M/B)$, so that we can keep at least one block of each of them in memory, allowing insertions and extractions in $O(1/B)$ I/Os on average. Each element in the queue is moved from a list $H$ to a list $L$ at most once, at a cost of $O(1/B)$ I/Os per element moved. The number of insertions is still at most four times the number of non-discarded extractions, and the number of discarded extractions is at most the number of insertions; thus the total number of priority queue operations is $O(n/\sqrt{M})$. Thus the priority queue operations take only $O(\scan(n/\sqrt{M}))$ I/Os in total, and the complete single-source shortest paths algorithm runs in $O(\scan(n))$ I/Os.

To put the vertices of the graph in breadth-first traversal order, we first cut up the clusters into chunks. $\Theta(n/M)$ $h$-clusters are read in $O(M/B)$ I/Os each, and for each $h$-cluster, $O(\sqrt{M})$ chunks of size $O(\sqrt{M})$ are written to disk; this takes $O(\scan(n))$ I/Os and results in $O(n/\sqrt{M})$ chunks in total.

If the chunk address list $A$ fits in memory, no I/O is needed to sort it. Otherwise we have $n > M$, and thus $(n/M^{3/4})^2 = n^2 / M^{3/2} > n / \sqrt{M}$. Therefore, sorting the chunk addresses with radix sort requires only two passes, each taking $O(n / \sqrt{M}) = O(\scan(n))$ operations.

Finally, reading the chunks takes $O(\scan(n) + n/\sqrt{M}) = O(\scan(n))$ I/Os. Because the distances from $s$ of the vertices in a single chunk differ by at most $O(\sqrt{M})$, only $O(\sqrt{M})$ stacks are non-empty at any time during the algorithm. Therefore we can always keep one block of each non-empty stack in memory, and do all $\Theta(n)$ stack operations in $O(1/B)$ I/Os per operation on average, for a total of $O(\scan(n))$.

In total, the whole algorithm takes $O(\scan(n))$ I/Os.

\paragraph{Relative I/O-volume.}
The input graph has size $n$ (see \secref{PrelimApx}). The output consists of a list of 64-bits vertex IDs, for a total size of $8n$ bytes.

As with the single-source shortest path algorithm (see \secref{SSSPApx}), we choose $h = 12$. The input file now has size $n$ instead of $64n$, and because 32 bits numbers suffice for the weights of $G'$ (edge weights are not more than roughly $4^h$), the total size of $G'$ will be roughly $64n$ instead of $128n$. Thus the shortest path computations now take an I/O volume of only $65n + 576n + n = 642n$, excluding writing the output.

Next we cut up clusters into chunks. The input, by cluster, can be piped through directly from the shortest path computations. The output consists of a sequence of chunks. Each chunk has one starting vertex for which we need to store an identifier and its distance from $s$. The other vertices (the vast majority) can be given by storing, for each vertex, which of its neighbours in the chunk are its children in the `local' breadth-first search tree: this can be done with 8 bits = 1 byte per vertex. Thus the file $C$ has size roughly $n$. The total number of chunks is less than $5 n/2^h$, that is, less than $n/819$; thus the I/O that is needed to sort the chunk address list $A$ with I/O-efficient merge sort is negligible.

In the final part of the algorithm, reading $A$ causes negligible I/O, reading $C$ results in an I/O volume of $n$ for sequential I/O and $5n/2^h \cdot B = 160n$ for random access to the starting nodes of the chunks, each 64-bits vertex identifier may be subject to two stack operations (I/O volume $16n$) and is finally written to the output (I/O volume $8n$).

In total we get an I/O volume of approximately $642n + n + 185n = 828n$, and the total size of the input and output files is $n + 8n = 9n$. Thus the relative I/O volume is approximately $828 / 9 < 100$.

\subsection{Minimum-spanning trees}\seclab{MSTApx}

The correctness of the minimum-spanning tree algorithms is based on the following lemma:

\begin{lemma}\label{lem:subtree}
For a grid graph $G$ and any $h$, let $G_1,...,G_k$ be the graphs induced by the $h$-clusters $Q_1,...,Q_k$ of $G$, and let $T_1,...,T_k$ be minimum spanning trees of $G_1,...,G_k$, respectively. Then the union $U$ of $T_1,...,T_k$ and $G^h$ contains a minimum spanning tree of $G$.
\end{lemma}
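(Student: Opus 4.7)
The plan is to use the standard MST edge-exchange (cycle property) argument to show that any MST $T^*$ of $G$ can be transformed, without increasing its total weight, into an MST all of whose edges lie in $U$.

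First I would fix an arbitrary MST $T^*$ of $G$ and consider any edge $e=(u,v)\in T^*$ with $e\notin U$. Every edge of $G$ either has both endpoints in the same $h$-cluster (so it lies in some $G_i$) or it connects two different $h$-clusters (so it lies in $G^h\subseteq U$). Hence $e$ must be an intra-cluster edge, say $e\in G_i$, and since $T_i\subseteq U$ already, we have $e\in G_i\setminus T_i$. Because $T_i$ is an MST of $G_i$, adding $e$ to $T_i$ creates a unique cycle $C$ contained in $G_i\subseteq U$, and by the cycle property every edge of $C$ has weight at most $w(e)$.

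Next I would perform the edge swap. Removing $e$ from $T^*$ splits $T^*$ into two subtrees $A$ and $B$ whose vertex sets partition $V(G)$, with $u\in A$ and $v\in B$. The cycle $C$ contains a $u$–$v$ path in $T_i$, so at least one edge $e'\neq e$ of that path must cross the cut $(V(A),V(B))$. Then $T^{\prime}:=(T^*\setminus\{e\})\cup\{e'\}$ is again a spanning tree of $G$, and since $w(e')\le w(e)$ and $T^*$ is an MST we must in fact have $w(e')=w(e)$ and $T'$ is itself an MST of $G$. Crucially, $e'\in T_i\subseteq U$.

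To finish I would iterate this exchange. Each swap strictly decreases the number $|T^*\setminus U|$ of MST edges that are not in $U$ (we remove an edge outside $U$ and add one inside $U$), so after finitely many swaps we obtain an MST of $G$ whose edge set is entirely contained in $U$, as required. The only real subtlety is the observation in the second paragraph that a non-$U$ tree edge has both endpoints in the same cluster, which makes $T_i$ available as the source of a replacement edge; once that is noted, the argument is a routine application of the cycle property.
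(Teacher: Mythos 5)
Your proposal is correct and follows essentially the same route as the paper: both observe that any tree edge outside $U$ must be an intra-cluster edge $e\in G_i\setminus T_i$, use the cycle property of $T_i$ to find a no-heavier replacement edge on the $u$--$v$ path in $T_i$ crossing the cut created by deleting $e$, and iterate the exchange until the MST lies in $U$. Your version merely makes explicit two points the paper leaves implicit (that inter-cluster edges are automatically in $G^h\subseteq U$, and that the exchange count $|T^*\setminus U|$ strictly decreases, guaranteeing termination).
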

\begin{proof}
Let $T$ be a minimum spanning tree of $G$. Suppose $T$ contains an edge $(u,v)$ of $G_i$ that is not included in $T_i$. Observe that $(u,v)$ must be at least as heavy as any edge of the path from $u$ to $v$ in $T_i$, otherwise we could get a lighter spanning tree of $G_i$ by inserting $(u,v)$ into $T_i$ and removing a heavier edge of the resulting cycle. Now we can modify $T$ as follows: remove $(u,v)$, separating $T$ into a component $T_u$ containing $u$ and a component $T_v$ containing $v$. Now the path from $u$ to $v$ in $T_i$ must contain at least one edge with one endpoint in $T_u$ and one endpoint in $T_v$: use this edge, which is at most as heavy as $(u,v)$, to reconnect $T_u$ and $T_v$. Thus we obtain a spanning tree of $G$ that is at most as heavy as $T$. In this way we can replace every edge of $T$ that is not included in $U$ by an edge that is included in $U$ and is not heavier, and we obtain a minimum spanning tree of $G$ that is included in $U$.
\end{proof}

\subsubsection{Minimum-spanning trees cache-aware in $O(\scan(n) + \sort(n/\sqrt{M}))$ I/Os}

\paragraph{Asymptotic I/O-complexity.}
Choosing $h$ as large as possible such that a minimum-spanning tree computation on $G_h$ can still be done in main memory, the algorithm described in \secref{SimpleMST} can be implemented to run in $O(\scan(n) + \sort(n/\sqrt{M}))$ I/Os: first process all $O(n/M)$ $h$-clusters $G_i$ one by one in memory to compute the trees $T'_i$, which each have size $O(\sqrt{M})$; thus $U'$ has size $O(n/\sqrt{M})$ and a minimum spanning tree of $U'$ is computed in $O(n/\sqrt{M} + \sort(n/\sqrt{M}))$ I/Os (using Prim's algorithm with an I/O-efficient priority queue); finally $T$ is constructed by expanding chains and adding dead ends cluster by cluster in $O(\scan(n))$ I/Os.

\paragraph{Relative I/O-volume.}

As explained in \secref{PrelimApx}, the input file has size $32n$. We will produce the output in the same format.

We choose $h = 12$, so that each $h$-cluster consists of $4096 \times 4096$ vertices, requiring $32 \cdot 4^h =$ 512 MB of storage.

We will produce $U'$ in the form of a list of vertices, which stores for each vertex its incident edges (other endpoints and weights), consecutively; in this list we reserve space for $8 \cdot 2^h$ vertices per cluster ($4 \cdot 2^h$ vertices on the boundary and $4 \cdot 2^h$ in the interior), each with at most 8 incident edges, each requiring 16 bytes; thus $U'$ contains $8n/2^h = n/512$ vertices in total, stored in $n/512 \cdot 8 \cdot 16 = n/4$ bytes.
The number of undirected edges in each of the spanning trees $T_1,...,T_k$ is at most $8 \cdot 2^h$; that makes $16 \cdot 2^h$ when counting them in both directions.

In the first phase (computing $U'$), we read the input file ($32n$) and write $U'$ (negligible).

The second phase (computing $T'$ from $U'$) requires accessing each vertex once over every incident edge, and once to add it to the tree. Assuming that the vertices on the boundary of each cluster are stored in order of $h$-number, the (almost always) at most three edges\footnote{on a corner there may be five edges} from a boundary vertex into a neighbouring cluster are (almost always) stored in one block; other edges may lead to vertices in different blocks. Thus we have $8n/2^h$ I/Os to access vertices of $U'$, $16n/2^h$ I/Os to follow edges of $T_1,...,T_k$, and $4n/2^h$ I/Os to follow edges of $G^h$; this makes $28n/2^h$ I/Os for a volume of $28n/2^h \cdot B = 896n$. The priority queue may not fit in memory, but it is still so small that the I/O needed for it is negligible.

In the last phase of the algorithm (computing $T$) we read the input file again, and $T'$, and write the output file.
% CAN I SPEED THIS UP?

In total we get an I/O volume of approximately $32n + 896n + 64n = 992n$, and the total size of the input and output files is $32n + 32n = 64n$. Thus the relative I/O volume is approximately $992 / 64 < 16$.

With an input file of up to 128 GB ($n = 2^{32}$), the graph $U'$ and the priority queue would fit in memory and the second phase does not need non-sequential I/O at all. The total relative I/O volume would then be roughly $(32+32+32)/(32+32) = 1.5$.

\subsubsection{Minimum-spanning trees cache-obliviously in $O(\scan(n))$ I/Os}

\paragraph{Asymptotic I/O-complexity.}
When processing an $h$-cluster $Q$ in the bottom-up phase, we read $O(2^h)$ vertices and edges from the connections stack, and push $O(2^h)$ vertices and edges onto the connections and the expansions stack; in the top-down phase we read $O(2^h)$ vertices and edges from the connections and expansions stacks, and push $O(2^h)$ vertices and edges onto the connections stack. An $h$-cluster contains $4^{h-i}$ $i$-clusters. The total number of stack operations in any $h$-cluster and its children is therefore $\sum_{i=0}^h O(4^{h-i} \cdot 2^i) = O(4^h)$.

Let $h = \frac12\log M - O(1)$ be the largest $h$ such that $4^h \leq cM$ (for a constant $c$ that is small enough). The number of I/Os needed to process an $h$-cluster $Q$ and all its descendants in the hierarchy is $O(\scan(M))$, since it reads and writes a contiguous part of size $O(M)$ of the input and output file, and it affects at most the top parts of size $\Theta(M)$ of the stacks, which could be buffered in main memory while processing $Q$. Thus, all $h$-clusters together are processed in $O(n/M \cdot \scan(M)) = O(\scan(n))$ I/Os in total.

When processing a larger cluster, that is, an $i$-cluster with $i > h$, we do not access the input and output files, but only work on the stacks. In the bottom-up phase, we read $O(2^i)$ vertices and edges from the connections stack, then we compute a minimum spanning tree of these edges, using Prim's algorithm with a heap of size $O(2^i)$, and then we prune and contract the tree and write $O(2^i)$ vertices and edges to the connections and expansions stacks. Reading, pruning, contracting and writing can all be done in linear time, and thus, in at most $O(2^i)$ I/Os. The computation of the minimum spanning tree takes $O(2^i \log 2^i) = O(i \cdot 2^i)$ time. However, because we can keep the top part of height $\log M - O(1)$ of the heap in memory during this computation, we need only $O(\max(i - \log M, 1) \cdot 2^i)$ I/Os. Processing the $i$-cluster in the top-down phase takes only linear time, and thus, only $O(2^i)$ I/Os. Adding up over all $i$-clusters with $i > h$, we get an I/O-bound of $\sum_{i = h+1}^{...} O(n/4^i \cdot \max(i - \log M, 1) \cdot 2^i) = \sum_{i = h+1}^{...} O(n/2^i \cdot \max(i - \log M, 1))$. Since $2^i$ grows at least a constant factor faster than $\max(i - \log M, 1)$, the sum is dominated by the first term, and the total number of I/Os during the processing of larger clusters is $O(n/\sqrt{M}) = O(\scan(n))$.

\paragraph{Relative I/O-volume.}
Assuming the grid graph is a square of $2^{20} \times 2^{20}$ vertices, processing the single 20-cluster, which contains the full grid, requires finding a minimum spanning tree on a graph of at most $4 \cdot 8 \cdot 2^{19} = 2^{24}$ vertices and less than $2^{25}$ edges. With a normal, 64-bits-pointer-based adjacency-list data structure and a heap-based priority queue (cross-linked with the vertex list), we need 32 bytes per vertex and 24 bytes per edge. So we need less than 1.5 GB. This still fits in main memory; thus the only disk access needed by the cache-oblivious algorithm is to the input and output files (which are read and written once, in order) and the connections and expansions stacks.

On any level below the top, we can process a cluster with only 1 GB of main memory. This leaves almost 1 GB which we can use to keep the connections stack in memory. The maximum size of the part of the connections stack that is used while processing an $i$-cluster in the bottom-up phase is $3 \cdot \sum_{j=0}^{i-1} T(j)$, where $T(j)$ is the size of the data pushed on the stack by a $j$-cluster. In the worst case, $T(j)$ must accommodate $8 \cdot 2^j$ edges for the spanning tree on $4 \cdot 2^j$ boundary vertices (using 24 bytes for each edge to identify endpoints and weight) and weights for $12 \cdot 2^j$ edges from boundary vertices to other clusters (8 bytes per edge): in total $288 \cdot 2^j$ bytes. Thus $3 \cdot \sum_{j=0}^{i-1} T(j) < 864 \cdot 2^i$, which is less than 1 GB all the way up to $i = 20$. Therefore the connections stack can be kept in memory. The top-down phase is symmetric.

Thus, practically the only I/O that needs to be done on the stacks is to write and read data on the expansions stack once. Each vertex can become a dead end once (so that one edge needs to be written to the expansions stack) and each vertex can be removed by contraction once (so that two edges may be written to the expansions stack); thus the number of edges (for 24 bytes: two vertex IDs and weight) that can be written to the expansions stack is at most twice the number of vertices. Therefore the total I/O for writing and reading the expansions stack is at most $96n$.

In total we get an I/O volume of approximately $32n + 96n + 32n = 160n$, and the total size of the input and output files is $32n + 32n = 64n$. Thus the relative I/O volume is approximately $160 / 64 = 2.5$.

%In practice, less of the connections stack would be kept in memory than would be optimal, and we should assume that most of the connections stack will be swapped out to disk whenever we process a cluster of roughly memory size or larger. Thus, most things that are pushed on the connections stack when processing $h$-clusters or clusters higher up in the hierarchy, will be read and written once. However, at these levels in the hierarchy $U$ is so small, that the resulting I/O is negligible.

If the input would be bigger ($n = \Omega(M^2)$), then $O(n/M)$ non-sequential I/Os may become necessary when processing clusters at the top of the hierarchy. This is negligible.

\subsection{Topological sorting}\seclab{TopSortApx}

\paragraph{Details of the algorithm.}
To construct the graph $G'$, we process the $h$-clusters one by one. For each pair of vertices $u,v$ on the boundary of an $h$-cluster $Q$, we include an edge $(u,v)$ in $G'$ if and only if there is a path from $u$ to $v$ in $G(Q)$. Furthermore, a vertex $u$ on the boundary of $Q$ may contain edges in $G$ to a constant number of neighbours in other clusters; these are also included in $G'$. Which edges are included in $G'$ is specified by storing a string of bits for $u$, which indicates, for each possible destination $v$, whether an edge $(u,v)$ exists. These bit strings are ordered by $h$-number. In addition, we create a file $D$, ordered by $h$-number, that stores the number of incoming edges for every vertex of $G'$. While doing so, the $h$-numbers of vertices of $G'$ with in-degree zero are inserted in a queue $Z$.

After constructing $G'$, we compute a topological numbering of its vertices, as follows. As long as $Z$ is not empty, we extract a vertex $u$ from $Z$, write $u$ to an output file $T'$, and then we look up the out-neighbours of $u$ in $G'$, decrease their in-degrees in $D$ by one, and add any vertices of $G'$ that have their in-degree lowered to zero to $Z$. After $Z$ has run empty, the positions of the vertices in $T'$ constitute the numbering we are after. We permute $T'$ into a file $R$ indexed by $h$-number, giving a topological number $r(u)$ (the position in $T'$) for each vertex $u$ of $G'$.

The rest of the algorithm is explained in \secref{TopSort}. When the graph is disconnected, the given algorithm to cut a cluster into chunks may fail to number all vertices. In that case we proceed as follows. In a given $h$-cluster $Q$, let $L$ be the `left-over set', that is, the set of unnumbered vertices that remain. The graph $G(L)$ induced by $L$ may have a number of weakly-connected components; we assign its vertices to chunks such that each weakly-connected component $K$ of $G(L)$ is in the same chunk as a vertex $u$ such that $v \in K$, and $u$ is the left neighbour of $v$ in the grid (even if there is no edge between $u$ and $v$).

\paragraph{Correctness.}
The correctness of the algorithm can be shown through the following lemma:
\begin{lemma}
The assignment of chunk numbers to vertices of $G$ is such that the vertices of any path in $G$ are in order of non-decreasing chunk number.
\end{lemma}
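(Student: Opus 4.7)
My plan is to reduce the claim to a statement about a single edge by transitivity along the path, and then handle inter-cluster and intra-cluster edges separately.

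If an edge $(u,v)$ of $G$ has endpoints in different clusters, both $u$ and $v$ must be boundary vertices, so the edge itself lies in $G^h\subseteq G'$. Since $r$ is a topological numbering of the DAG~$G'$ and boundary vertices keep their initial chunk number $c(\cdot)=r(\cdot)$, we immediately get $c(u)=r(u)<r(v)=c(v)$.

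For an edge $(u,v)$ inside a single cluster $Q$, I will establish the strengthened invariant, maintained throughout the P/S iterations: \emph{for any two currently numbered vertices $x,y$ such that $G(Q)$ contains a directed path from $x$ to $y$, $c(x)\leq c(y)$}. The base case, in which only boundary vertices are numbered, follows because every directed path between two boundary vertices in $G(Q)$ gives an edge in $G'$ and hence $r(x)<r(y)$. For the inductive step, consider a P step that newly numbers some vertex $u$ as $c(u)=\max_{w\in P(u)}c(w)$, with $P(u)$ evaluated at the start of the step. For any numbered $y$ reachable from $u$ in $G(Q)$, each pre-step numbered $w\in P(u)$ is also in $P(y)$ via $w\to\cdots\to u\to\cdots\to y$; the inductive hypothesis (if $y$ was numbered earlier) or the same P-step formula applied to $y$ (if $y$ is also newly numbered, noting that $P(y)\supseteq P(u)$ at the start of the step) yields $c(y)\geq c(w)$ for every such $w$, hence $c(y)\geq c(u)$. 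For any numbered $x$ with a path to $u$, $x\in P(u)$ so $c(u)\geq c(x)$ directly from the max formula. An S step is handled symmetrically, using the monotonicity $S(x)\supseteq S(u)$ for any $x$ on a path to $u$.

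For a vertex $v$ that ends up in the leftover set $L$ of some cluster $Q$, we have $P(v)=S(v)=\emptyset$ in $G(Q)$ at the end of the P/S phase, so no numbered (and in particular no boundary) vertex of $Q$ lies on a directed path to or from $v$ in $G(Q)$. Because every non-leftover vertex of $Q$ is numbered and $v$ is necessarily interior, every $G$-neighbour of $v$ must itself lie in $L$; and because distinct weakly connected components of $G(L)$ share no edges, all $G$-neighbours of $v$ lie inside $v$'s component $K$. Hence any directed $G$-path meeting $K$ is entirely contained in $K$, and assigning $K$ to a single chunk—whatever chunk of the left neighbour it is attached to—cannot create a violating pair.

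The main subtlety will be the P/S induction, specifically the case in which two path-connected vertices are numbered simultaneously within the same step; this hinges on the inclusions $P(y)\supseteq P(u)$ for $u\to y$ and $S(x)\supseteq S(u)$ for $x\to u$ in $G(Q)$. The leftover analysis is then straightforward once one notices that leftover components are fully isolated in~$G$.
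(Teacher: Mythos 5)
Your proof is correct and follows essentially the same route as the paper's: reduce to single edges, handle inter-cluster (and boundary-to-boundary) edges via the topological numbering of $G'$, handle interior edges via the monotonicity of $P$ and $S$ along paths, and observe that leftover components are edge-isolated in $G$ so any uniform chunk assignment for them is safe. The only real difference is presentational: you package the interior case as a single invariant maintained over the P/S rounds, which cleanly subsumes the paper's explicit five-case analysis of which endpoint was numbered first and by which rule.
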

\begin{proof}
We need to show that if $(u,v)$ is an edge in $G$, then the chunk number of $v$ is at least as high as the chunk number of $u$. If $u$ and $v$ are in different $h$-clusters, then $u$ and $v$ are vertices of $G^h$ and $(u,v)$ is an edge in $G^h$, so the topological sorting of $G'$ must have assigned a higher number to $v$ than to $u$. Otherwise, $u$ and $v$ lie in the same $h$-cluster $Q$.

If $u$ and $v$ are boundary vertices of $Q$, then, again, $u$ and $v$ are vertices of $G^h$ and $(u,v)$ is an edge in $G^h$, so the topological sorting of $G'$ must have assigned a higher number to $v$ than to $u$.

If $u$ is on the boundary of $Q$ but $v$ is not, then in the chunk assignment phase, $v$ will immediately get a chunk number at least as high as that of $u$, because $u \in P(v)$.

If $v$ is on the boundary of $Q$ but $u$ is not, then $u$ can get its number in two ways. The first possibility is that immediately after assigning chunk numbers to the boundary vertices of $Q$, the set $P(u)$ is not empty and $u$ gets the chunk number of a boundary vertex $t$ such that there is a path from $t$ to $u$ in $G(Q)$. This implies that there is also a path from $t$ to $v$. Thus $G'$ contains an edge $(t,v)$, and the topological sorting of $G'$ must have assigned a higher number to $v$ than to $t$ (and consequently, $u$). The second possibility is that $P(u)$ is initially empty. But $S(u)$ contains at least $v$, so $u$ will be assigned a chunk number at least as low as that of $v$.

If neither $u$ nor $v$ is on the boundary of $Q$, we distinguish five cases.\begin{compactenum}
\item $v$ did not get its number before $u$, and $u$ got its chunk number as the highest number of any vertex in $P(u)$. Then, at the same time, $v$ must have gotten a chunk number that is at least as high, since $P(v) \supset P(u)$.
\item $v$ did not get its number before $u$, and $u$ got its chunk number as the lowest number of any vertex in $S(u)$. Then, if $v$ got its number in the same round as $u$, then $v$ got a number that is at least as high, since $S(v) \subset S(u)$; otherwise, $v$ must have gotten its number in the next round, since $P(v)$ included $u$ at that time, and the number of $v$ will be at least the number of $u$.
\item $u$ did not get its number before $v$, and $v$ got its chunk number as the lowest number of any vertex in $S(v)$. This case is symmetric to the first case.
\item $u$ did not get its number before $v$, and $v$ got its chunk number as the highest number of any vertex in $P(v)$. This case is symmetric to the second case.
\item If the above cases do not apply, then $u$ and/or $v$ must have gotten their numbers in the final step, when vertices of the left-over set $L$ get their numbers. Since there is an edge from $u$ to $v$, the vertices $u$ and $v$ must be in the same weakly-connected component of $G(L)$, and thus they got the same chunk number.
\end{compactenum}
\end{proof}
Since the algorithm outputs the vertices of $G$ chunk by chunk in order by chunk number, and within each chunk, in topological order, the above lemma implies directly that the algorithm outputs the vertices of $G$ in topological order.

\paragraph{Asymptotic I/O-complexity.}
The analysis of the construction of $G'$ is similar to the analysis of the first phase of the single-source shortest path algorithms. The main difference is that, to obtain the in-degrees of the vertices on the boundary of a cluster, we also need to access vertices on the boundaries of adjacent clusters. As a very conservative estimate, this may have the effect that blocks that contain boundary vertices of clusters are accessed eight more times when processing neighbouring clusters. This amounts to an additional $O(\scan(n))$ I/Os.

In the next step, topologically sorting $G'$, we extract $|V_s| = O(n/\sqrt{M})$ vertices from $Z$; operations on $Z$ and $T'$ take $O(1/B)$ I/Os each on average; the dominant terms in this step are accesses to $G'$ and $D$ and permuting $T'$ into $R$. Because $D$ is ordered by $h$-number, accesses to $G'$ and $D$ take $(\scan(\sqrt{M}))$ I/Os for every vertex extracted from $Z$, which makes $O(\scan(n))$ in total. Permuting $T'$ into $R$ costs $O(\min(|T'|, \sort(|T'|))$ I/Os; since $T'$ contains $O(n/\sqrt{M})$ vertices, this is at most $O(\scan(n))$).

To make the chunks we read the input graph and $R$ sequentially and we write $C$ and $A$ sequentially, all in $O(\scan(n))$ I/Os.

Afterwards, we permute $A$ in, again, $O(\scan(n))$ I/Os.

Finally, to produce the output, we read $O(n/\sqrt{M})$ chunks of total size $O(n)$ in $O(n/\sqrt{M} + \scan(n)) = O(\scan(n))$ I/Os.

The whole algorithm runs in $O(\scan(n))$ I/Os.

\paragraph{Relative I/O-volume.}
The input graph has size $n$ (see \secref{PrelimApx}). The output consists of a list of 64-bits vertex IDs, for a total size of $8n$ bytes.

When processing an $h$-cluster in memory, we will need $4^h$ bytes for the input, and $2 \cdot 4^h$ bytes for the edges from $G'$ in this cluster (see \secref{PrelimApx}). We choose $h = 14$, so that each $h$-cluster requires 768 MB of working space.

The number of incoming edges of a vertex in $G'$ can be stored as a 16-bit number, so that the size of $D$ is 2 bytes per vertex of $G'$, which is $2 \cdot 4n / 2^h$ bytes $=$ 512 MB. We will keep $D$ in memory throughout the algorithm. We will store $G'$ by giving for each vertex, in order of their $h$-numbers, a bit vector that specifies its outgoing edges (see \secref{PrelimApx}).

In the first phase (building $G'$) we read each $h$-cluster from the input file into memory once (sequentially), and write the bit vectors in $G'$ (sequentially), the incoming edge counts in $D$ (in memory); and one or more vertices to the queue $Z$ (sequentially). To be able to write the incoming edge counts, we also need to read the blocks that contain the boundary vertices from other clusters that have outgoing edges to the current cluster. Each block of size $2^{17}$ contains two 8-clusters (recall that the input is in Z-order), so each $h$-cluster is stored in $4^h / 2^{17} = 2048$ blocks, and the number of blocks adjacent to a cluster is at most four in the corners plus $4 \cdot 2^{14} / 2^8 = 256$ along the edges. The extra I/O incurred by accessing boundary vertices of neighbouring $h$-clusters will therefore be negligible in the end result of this calculation. The sequential access to $Z$, to which eventually only $|V_h| = 4n/2^h$ vertices will be written, incurs negligible I/O. The total I/O-volume per $h$-cluster in this phase is therefore approximately $n$ (for reading the input once) plus $2n$ (for writing $G'$), which makes $3n$ in total.

In the next phase (topologically sorting $G'$), reading and writing $Z$ and $T'$ causes negligible I/O, and all access to $D$ is for free because it is kept in memory. Each vertex $v$ extracted from $Z$ results in one random access to $G'$ to retrieve its outgoing edges. Thus we get an I/O volume of $2^{19-h} n = 32n$ (see \secref{PrelimApx}). Permuting $T'$ into $R$ using I/O-efficient merge sort causes negligible I/O.

In the third phase (making chunks), reading $R$ and writing the chunk addresses to $A$ is negligible. The input file (of size $n$) is read once, and all vertex IDs are written to a chunk file of size $4n$ (32 bits numbers for vertex IDs suffice, if we use local IDs, relative to the cluster). Permuting the chunk addresses in $A$ takes negligible I/O when done with I/O-efficient merge sort.
% BIT TIGHT ON MAIN MEMORY HERE: HARDLY ENOUGH SPACE TO WORK, NEEDS SOME REALLY SMART PROGRAMMING

In the final phase of the algorithm (putting everything in order) we use one random accesses per vertex of $V_h$ (to start reading its chunk); larger chunks may also require sequential access to the chunk file. Thus we get an I/O-volume of $2^{19-h} = 32n$ for random access and $4n$ for sequential access to the chunk file, plus $8n$ to write the result.

In total we get an I/O volume of approximately $3n + 32n + 5n + 44n = 84n$, and the total size of the input and output files is $n + 8n = 9n$. Thus the relative I/O volume is approximately $84 / 9 < 10$.

\subsection{Planar time-forward processing}\seclab{TFPApx}

\paragraph{Asymptotic I/O-complexity.}
The analysis of the I/O-complexity mostly follows that of the topological-sorting algorithm from the previous section. The only differences are when we process the chunks in topological order, and when we sort the label file.

First, we need to retrieve inter-cluster messages from the beginning of the file for each chunk (this takes O(1) I/Os per chunk, which makes $O(n/\sqrt{M})$ in total).

Second, we need to output the results: in total we write $O(n)$ labels, to $O(n/\sqrt{M})$ different contiguous sections of $L$, in $O(\scan(n))$ I/Os.

Third, we need to write outgoing messages. In total, $O(n)$ messages may need to be written. Because the input, together with the neighbourship relations that are used to number left-over chunks, constitutes a plane graph, and the chunks induce disjoint connected subgraphs of this graph, the number of chunk pairs between which messages must be passed is linear in the number of chunks, which is $O(n/\sqrt{M})$.
%The adjacency relations of the chunks, in clockwise order around each chunk, form a plane graph (this can be seen by contracting all chunks to a single vertex, and observe that a plane graph can be maintained). Therefore the number of such adjacency relations is at most three times the number of chunks, which is $O(n/\sqrt{M})$.
For each such adjacency relation, messages need to be written once, namely when the first of the two chunks is processed, and the messages are written to a contiguous set of addresses. Therefore, writing all messages takes $O(n/\sqrt{M} + \scan(n)) = O(\scan(n))$ I/Os in total.

Fourth, we need to put the labels into Z-order. Since $L$ is ordered by cluster, we can simply read $L$ sequentially cluster by cluster, reordering each cluster in memory, and writing the reordered clusters to the output file sequentially, in $O(\scan(n))$ I/Os.

Thus the complete algorithm takes $O(\scan(n))$ I/Os in total.

\paragraph{Relative I/O-volume.}
The input graph has size $n$ (see \secref{PrelimApx}). We assume the labels to be computed are 64 bits numbers, so the output file has size $8n$.

We choose $h = 13$, so that each $h$-cluster consists of $8\,192 \times 8\,192$ vertices, and we have enough space (32 bytes per vertex) to compute chunks and message addresses when processing a cluster in memory. Note that messages addresses are all relative within a cluster, which has only $4^h = 2^{26}$ vertices, so that 32-bits numbers are sufficient.

The analysis of the first phase (constructing $G'$) is as in \secref{TopSortApx} (I/O volume $3n$).

The number of accesses to $G'$ in the second phase (computing a topological numbering of $G'$) is twice as large as in \secref{TopSortApx} (because the clusters are smaller), resulting in an I/O volume of $64n$.

The I/O volume of the third phase (computing chunks and message addresses) is dominated by reading the input file (size $n$) and writing the chunk file $C$. Since the input graph is planar, the average number of outgoing edges per vertex is at most three, and the average number of incoming edges per vertex is at most three. Therefore, per vertex in the chunk file, on average we may need at most roughly 42 bytes: 4 bytes for a vertex ID (relative to the chunk's boundary vertex), 2 bytes for the bit vectors specifying the incoming and outgoing edges, 12 bytes for the 32 bits message addresses of three outgoing edges, and 24 bytes for the 64 bits labels of the in-neighbours. Thus we have an I/O volume of at most $n + 42n = 43n$ in the third phase.

In the next phase (the actual time-forward processing), the dominant terms are $42n$ for reading the cluster file (and incoming messages), $12n$ for writing the computed labels to $L$ (8 bytes per label, and 4 bytes for a vertex ID, local within an $h$-cluster), $24n$ for writing outgoing messages, and $384n$ for, on average, at most six random accesses per chunk (one for reading the chunk, one for reading the incoming inter-cluster messages, one for writing the computed labels to $L$, and three for writing outgoing intra-cluster messages (recall from \secref{PrelimApx} that one random access per vertex of $G'$ contributes $2^{19-h}n = 64n$ to the I/O volume). Thus we have an I/O volume of $462n$ in this phase.

Finally, we read $L$ ($12n$) and write the final output for an I/O volume of $12n + 8n = 20n$.

In total we get an I/O volume of approximately $3n + 64n + 43n + 462n + 20n = 592n$, and the total size of the input and output files is $n + 8n = 9n$. Thus the relative I/O volume is approximately $592 / 9 < 70$.

\paragraph{Comparison to priority-queue based time-forward processing.}
It may be interesting to try to compare the I/O volume of our new time-forward processing algorithm for grid graphs to traditional time-forward processing with a priority queue.

To make it easy, let us assume the input graph is given together with topological numbers for each vertex ($n$ bytes for edges, $8n$ bytes for topological numbers). By a three pass $M/B$-way merge sorting step we can turn it into an input graph in topological order, given in $40n$ bytes (for each vertex a 64 bits ID, a 64 bits topological number, and on average, three 64 bits topological numbers for its out-neighbours). The sorting requires an I/O volume of roughly $9n + 40n$ for the first pass and $40n + 40n$ for each the remaining passes, for a total of $209n$.

Reading the resulting, sorted graph and writing pairs of vertex IDs plus labels to an output file results in an I/O volume of $56n$. The priority queue may need to be three levels deep. Assuming that on average, in the priority queue each of $3n$ messages is written to disk and read into memory three times, this creates an I/O volume of $16 \cdot 6 \cdot 3n = 288n$. The total I/O volume for the actual time-forward processing phase is thus $288n + 56n = 344n$.

Subsequently sorting the output by $M/B$-way merge sort in three passes takes $6 \cdot 16n - 8n = 88n$ (in the last pass the vertex IDs can be omitted).

In total we get an I/O volume of approximately $209n + 344n + 88n = 641n$; thus the relative I/O volume is approximately $641 / 9 \approx 70$.

\medskip
Therefore it is hard to make a comparison. The calculation for our new algorithm assumes the worst case for the number of intra-cluster messages and the random accesses needed to write them. In practice, the number of intra-cluster messages may easily be much smaller and many may be written between chunks that are neighbours in the chunk file, reducing random I/O. Also, chunks that are consecutive in the topological order may often be close to each other in the chunk file and the label file too, further reducing random I/O.

The calculation for the traditional time-forward processing algorithm makes a wild guess at the amount I/O required for priority queue operations. In practice, this may well be much better (if the priority queue is never big) or much worse. Furthermore, the calculation omits the effort required to topologically sort the graph if no topological numbering is given.

%The only conclusion at this point can be that both algorithms may be reasonable alternatives, depending on the application.

%For time-forward processing, the traditional, priority-queue-based algorithm, and our new algorithm depend on the input graph in different ways. In the worst case, the new algorithm has the better asymptotic bound, but it would remain to be seen which algorithm is faster in practice in any given context.

\subsection{Euler tours}\seclab{EulerApx}

\paragraph{Relative I/O-volume.}
The input graph has size $n$ (see \secref{PrelimApx}). Because each vertex in the tour is a neighbour of the previous one, each vertex except the first one in a chunk, and in the final output, can be specified by a 8-bit number. The tour contains $2n$ edges; thus the output size is $2n$.

In comparison to the general topological sorting algorithm analyse in \secref{TopSortApx}, we now use a more efficient representation of $G'$: with one 32 bits number per point of entry, we only need $48 \cdot 2^h$ bytes per cluster, and $48n / 2^h$ bytes in total. The size of $G'$ is therefore negligible.

We can also fit larger clusters in main memory; we can now choose $h = 15$. This reduces the cost of random accesses while computing a topological numbering of $G'$ and while putting the vertices in order: from $32n$ to $16n$ in each of these two steps.

With the calculation from \secref{TopSortApx} changed accordingly, we find that we can compute an Euler tour of a tree in a grid graph in an I/O volume of roughly $n + 16n + 3n + 20n = 40n$, and a relative I/O volume of $40n / (n + 2n) < 15$.

Note that if we would produce full vertex IDs in the final output, the ratio would improve to $46/9 < 6$.

\subsection{Further applications}\seclab{DiscussionApx}
As mentioned in the introduction, algorithms to label the connected components in a grid graph in $O(\scan(n))$ I/Os were already known. With our techniques one can also \emph{sort} a grid graph into its connected components. For this purpose, one can adapt the topological-sorting algorithm. Instead of processing $G'$ to obtain a topological numbering of the vertices of $G^h$, one would process $G'$ to label the vertices of $G^h$ such that vertices in the same component get the same number. Each cluster $Q$ can be cut up into chunks that lie in the same component of $G$, plus an additional chunk with components that lie inside $Q$ and are not connected to any vertices outside~$Q$. These chunks can then be put in order in the same way as the chunks in the topological-sorting algorithm, in $O(\scan(n))$ I/Os.

Our minimum-spanning tree algorithms can be adapted to another application in hydrological analysis. Under the assumption that water always flows downhill and does not disappear in any other way, depressions in a terrain fill up with lakes. For a terrain that is given as a grid of cells of which the elevation is known, we could compute how far the water in each cell would rise before the lake overflows and the water finds a way out. The water levels of the lakes can in fact be computed through a minimum-spanning tree computation: make an edge between every pair of neighbouring cells, and give each edge an elevation equal to the elevation of the highest of the two cells. This creates a graph of which a minimum-spanning tree $T$ can be computed. One can now prove that in each cell $c$, the water will rise to the level of the highest edge on the path in $T$ from $c$ to the sea. The minimum-spanning tree algorithm can easily be adapted to compute the water levels of all cells on the fly while computing the minimum-spanning tree. Our report on flow accumulation gives more details on how to implement this~\cite{SimpleFlow}.

\end{document}